 \newtheorem{ass}{Assumption}[section]
 \newtheorem{prop}{Proposition}[section]
 \newtheorem{lem}{Lemma}[section]
 \newtheorem{coro}{Corollary}[section]
 \newtheorem{defn}{Definition}[section]
 \newtheorem{rem}{Remark}[section]
 \newtheorem{exam}{Example}[section]
\title{\LARGE \bf
A Behavioral Perspective on Models of Linear Dynamical Networks with Manifest Variables*
}
\author{Shengling Shi$^{1}$, Zhiyong Sun$^{2}$ and Bart {De Schutter}$^{1}$  {}
\thanks{*This research has received funding from the
European Research Council (ERC) under the European Union’s Horizon
2020 research and innovation programme (Grant agreement No.101018826 - CLariNet).  
}
\thanks{$^{1}$Shengling Shi and Bart De Schutter are with the Delft Center for Systems and Control, Delft University of Technology, the Netherlands
        {\tt\small \{s.shi-3, b.deschutter\}@tudelft.nl}}%
\thanks{$^{2}$Zhiyong Sun is with the Department of Electrical Engineering, Eindhoven University of Technology, the Netherlands
        {\tt\small z.sun@tue.nl}}%
 }
\begin{document}

\maketitle
\thispagestyle{empty}
\pagestyle{empty}

\begin{abstract}
Networks of dynamical systems play an important role in various domains and have motivated many studies on the control and analysis of linear dynamical networks. For linear network models considered in these studies, it is typically pre-determined what signal channels are inputs and what are outputs. These models do not capture the practical need to incorporate different experimental situations, where different selections of input and output channels are applied to the same network. Moreover, a unified view of different network models is lacking. This work makes an initial step towards addressing the above issues by taking a behavioral perspective, where input and output channels are not pre-determined. The focus of this work is on behavioral network models with only external variables. By exploiting the concept of hypergraphs, novel dual graphical representations, called system graphs and signal graphs, are introduced for behavioral networks. Moreover, connections between behavioral network models and structural vector autoregressive models are established. In addition to their connections in graphical representations, it is shown that the regularity of interconnections is an essential assumption when choosing a structural vector autoregressive model.
\end{abstract}

\section{INTRODUCTION}
Networks of dynamical systems are spatially distributed dynamical systems and consist of numerous individual subsystems that interact with each other to achieve sophisticated tasks. They play a pivotal role in various domains, including robotic swarms, power grids, transportation systems, and biomolecular networks \cite{mesbahi2010graph,ramos2022overview,majhi2022dynamics}.  

Motivated by the importance of these systems, considerable attention has been given to the development of analysis and control techniques for linear dynamical networks. These techniques have been developed in different modeling frameworks. Early works can be found in, e.g., the study of composite systems by Rosenbrock \cite{rosenbrock1974contributions} in the $70$s by using polynomial models. Without the intention to make a comprehensive review of the vast literature, we mention several other modeling frameworks: state-space models for linear dynamical networks \cite{fuhrmann2015mathematics,arcak2022compositional}, polynomial models \cite{fuhrmann2015mathematics}, the structural vector autoregressive (SVAR) model in time series analysis \cite{hyvarinen2010estimation}, and models consisting of transfer functions in system identification \cite{gonccalves2008necessary,shi2020generic}.

The above models of linear dynamical networks, if not autonomous, typically start with chosen input and output channels. However, from a practical point of view, it may be desired to conduct multiple experiments, with different choices of input and output channels, on the same dynamical network. The above freedom of choosing inputs and outputs of one network is not captured in the above modeling frameworks. Moreover, from a theoretical point of view, it is often not clear how these models are connected, and what underlying assumptions have been made when choosing a particular model. Overall, we argue that there is a lack of a unified theoretical framework for looking at these different linear network models.

The above two problems motivate us to adopt the perspective from the behavioral system theory, introduced by Willems \cite{willems1989models,willems1991paradigms}. In the behavioral theory, possible time trajectories in $(\mathbb{R}^q)^{\mathbb{Z}_+}$ allowed by a system, also called behaviors, take the central role. Different models that can generate the behaviors are regarded as different representations of the system. One of the most novel aspects in the behavioral theory is that inputs and outputs are not pre-distinguished but deduced from the modeling framework. This shows the potential of the behavioral theory to capture different experimental settings in modeling linear dynamical networks. Another important feature of the behavioral theory is that an axiomatic characterization of linear system representation is provided: A subset of $(\mathbb{R}^q)^{\mathbb{Z}_+}$ is closed, linear, and shift-invariant (time-invariant) iff it can be represented by a polynomial model \cite{willems1986time}. This brings polynomial models onto the central stage, and their connections with other representations have also been established \cite{willems1991paradigms}. This axiomatic characterization is particularly suitable for understanding the assumptions and connections of different network models.

Since the initial discussions on networks from the behavioral perspective \cite{willems2007behavioral}, several problems have been studied in the behavioral theory \cite{bisiacco2016consensus,shali2022composition,steentjes2022canonical} and in its applications to physical networks \cite{hughes2017controllability}. The results in \cite{bisiacco2016consensus,steentjes2022canonical} concern control problems of linear dynamical networks from the behavioral perspective. In \cite{shali2022composition}, the focus is on formalizing additional restrictions on the behavior, called contracts, for input-output models. The analysis of passive networks and electrical circuit networks is studied in \cite{hughes2017controllability}. Another interesting extension of the behavioral theory is the incorporation of category theory \cite{fong2016categorical}. However, we are not aware of recent works that connect different network models and study network representations in a behavioral framework.

In this work, following \cite{willems1991paradigms,willems2007behavioral},  we make an initial step towards addressing the two ultimate issues discussed previously, i.e., the incorporation of freedom for choosing input and output channels in a network modeling framework and building a unified view on different network models. For simplicity, we limit the scope to linear network models without latent variables, i.e., the variables that are internal and unmeasured. Compared to the previous studies on networks in the behavioral setting, our contributions are as follows:
\begin{itemize}
    \item Novel graphical representations for the network model in the behavioral setting are introduced;

    \item An explicit connection between the behavioral network model and the SVAR model is established.
\end{itemize}

More specifically, we start from the interconnection of linear dynamical systems in the behavioral framework \cite{willems1991paradigms}, where inputs and outputs are not pre-distinguished. Then, novel graphical representations of behavioral networks are introduced by exploiting the concept of \textit{hypergraphs} \cite{schrijver2003combinatorial}. This leads to novel dual concepts called \textit{system graphs} and \textit{signal graphs} as formal graphical representations of behavioral networks. The system graph models dynamical systems as vertices and formalizes the standard graphical visualization of networks in the behavioral theory. The signal graph models signals as vertices and systems as edges, which match more closely with the behavioral framework, where trajectories play the central role. Finally, we establish a connection between behavioral networks and SVAR models, due to the important role of SVAR models in various applications \cite{cologni2008oil}. We show that the regularity of interconnections, as introduced in \cite{willems1997interconnections}, is a fundamental assumption underlying SVAR models. More interestingly, we also show explicit connections between system graphs, signal graphs of behavioral networks and directed graphs of SVAR models. 

\section{Preliminaries}
\subsection{Notation}
The notation $\mathbb{Z}_+$ denotes the set of non-negative integers. The set of all functions from a set $\mathcal{X}$ to a set $\mathcal{Y}$ is denoted as $\mathcal{Y}^\mathcal{X}$. The notation $\mathrm{col}(F_1, F_2)$, where $F_1$ and $F_2$ are two matrices with the same number of columns, denotes $\begin{bmatrix}
    F_1\\F_2
\end{bmatrix}$. Given $w \in \mathbb{R}^q$, we say that $\mathrm{col}(u,y)$ is a component-wise partition of $w$ if a permutation matrix $\Pi$ exists such that $ w = \Pi \mathrm{col}(u,y)$. Given a positive integer $m$, the symbol $\mathbb{I}_m$ denotes the set $\{1,\dots,m\}$. The notation $w_{\mathbb{I}_m}$ denotes $\mathrm{col}(w_{1}, \dots, w_{m})$, and for any $\bar{\mathbb{I}} =\{i_1,\dots,i_r\} \subseteq  \mathbb{I}_m$, $\pi_{\bar{\mathbb{I}}}(w_{\mathbb{I}_m} )$ denotes a projection and equals $w_{\bar{\mathbb{I}}}$.

For any positive integer $q$, $\mathbb{R}^{\bullet \times q}[s]$ denotes the set of polynomial matrices with $q$ columns, an arbitrary number of rows, and an indeterminate $s$. A polynomial matrix $U \in \mathbb{R}^{q \times q}[s]$ is unimodular if its inverse is also a polynomial matrix. The notation $\mathrm{deg}(r)$ of a polynomial $r \in \mathbb{R}[s]$ denotes its degree. We define a function $\mathcal{M} : \mathbb{R}^{m \times q}[s] \to \{0,1\}^{m \times q}$ such that $[\mathcal{M} (R)]_{ij}=0$ iff $R_{ij} =0$, i.e., $\mathcal{M} (R)$ shows the sparsity pattern of $R$. Note that $R=0$ means $R$ is a zero function. Notation $[R]_{i\star}$ denotes the $i$-th row of $R$.
\subsection{Systems theory from a behavioral perspective}
Following \cite{willems1991paradigms}, we define a dynamical system as $\Sigma=(\mathbb{T}, \mathbb{W}, \mathcal{B})$ with a time axis $\mathbb{T}$, a signal space $\mathbb{W}$, and its behavior $\mathcal{B} \subseteq \mathbb{W}^\mathbb{T}$. In this work, we will focus on linear time-invariant systems in a discrete-time setting, and the main results also hold for the continuous-time setting with minor modifications as highlighted later in Remark~\ref{rem:CT}. 

Following the notation in \cite{willems1997interconnections}, any $R \in \mathbb{R}^{\bullet \times q}[s]$ defines a discrete-time dynamical system via the equation 
\begin{equation} \label{eq:Kernel}
   R(\sigma)w=0
\end{equation}
as $\Sigma igma  (R) \triangleq \big(\mathbb{Z}_+, \mathbb{R}^q, \mathrm{ker}(R)\big)$, where $\mathrm{ker}(R) = \{w \in (\mathbb{R}^q)^{ \mathbb{Z}_+} \mid w \text{ satisfies }\eqref{eq:Kernel} \}$,
and $\sigma$ is the shift operator, i.e., $\sigma w(t)=w(t+1)$. Notation $\mathcal{L}^q$ denotes the set of all such dynamical systems, i.e., $\mathcal{L}^q \triangleq \{ \Sigma igma(R) \mid R \in \mathbb{R}^{\bullet \times q}[s]  \}$. As shown in \cite{willems1991paradigms}, $\mathcal{L}^q$ actually contains the linear time-invariant finite-dimensional dynamical systems.

A polynomial matrix $R$ defines $\Sigma igma(R)$ uniquely; however, $\Sigma igma(R)$ does not uniquely define a polynomial matrix: It holds that $\Sigma igma(R) = \Sigma igma(U R)$, where $U$ is any unimodular polynomial matrix \cite{willems1986time}. We call $R \in \mathbb{R}^{\bullet \times q}[s]$ a \textit{kernel representation} of $\Sigma \in \mathcal{L}^q$ if $\Sigma = \Sigma igma(R)$. Moreover, any $\Sigma \in \mathcal{L}^q$ admits a kernel representation $R \in \mathbb{R}^{\bullet \times q}[s]$ that has full row rank (over the polynomial ring $\mathbb{R}[s]$). Such an $R$ is called a \textit{minimal kernel representation} of $\Sigma$, and its rank is uniquely determined by $\Sigma$. We define the function $p : \mathcal{L}^q \to \{1,\dots, q\}$ such that $p(\Sigma)$ is the rank of any minimal kernel representation of $\Sigma \in \mathcal{L}^q$. If $p(\Sigma) = q$, i.e., $\Sigma$ can be represented by $R(\sigma)w=0$ with $\det(R)\not=0$, $\Sigma$ is an autonomous system.

Given a minimal kernel representation $R \in \mathbb{R}^{p(\Sigma) \times q}[s]$ of $\Sigma \in \mathcal{L}^q$, the McMillan degree of $R$ is defined as the maximum degree of all the $p(\Sigma) \times p(\Sigma)$ minors of $R$ \cite{willems1997interconnections}, and it is uniquely determined by $\Sigma$. Therefore, we define the function $n : \mathcal{L}^q \to \mathbb{Z}_+$ such that $n(\Sigma)$ is the McMillan degree of $\Sigma$. As shown in \cite{willems1991paradigms}, $n(\Sigma)$ also equals the state dimension of a minimal state-space representation of $\Sigma$.

For any $\Sigma= \big(\mathbb{Z}_+, \mathbb{R}^q, \mathcal{B} \big) \in \mathcal{L}^q$, there exist $P \in \mathbb{R}^{p(\Sigma) \times p(\Sigma)}[s]$ with $\det(P) \not=0$ and $Q \in \mathbb{R}^{p(\Sigma) \times (q-p(\Sigma))}[s]$ such that 
\begin{equation} \label{eq:IOsystem}
\Sigma: \quad   P(\sigma)y = Q(\sigma) u
\end{equation}
describes the behavior of $\Sigma$, i.e., $\mathcal{B}= \{w \mid w = \Pi \mathrm{col}(u, y ) \in (\mathbb{R}^q)^{ \mathbb{Z}_+},  \eqref{eq:IOsystem} \} $ for some permutation matrix $\Pi$. Such a component-wise partition $\mathrm{col} (u, y )$ of $w$ is called an \textit{input-output partition} with input $u$ and output $y$. The input $u$ is \textit{free}, i.e., for any $u$, there exists a $y$ such that $ \Pi \mathrm{col}(u, y ) \in \mathcal{B}$. If $P^{-1}Q$ is proper, then $\mathrm{col} (u, y )$ is called a \textit{proper input-output partition} of $w$. A proper input-output partition always exists: Given a minimal kernel representation $R$ of $\Sigma \in \mathcal{L}^q$, choose $P$ to be a $p(\Sigma) \times p(\Sigma)$ submatrix of $R$ such that $\mathrm{deg}( \det(P))$ is the largest among all submatrices. However, note that this partition may not be unique, leading to the freedom in choosing different inputs and outputs.

\section{Manifest networks in a behavioral setting}
\subsection{Algebraic representation}
 In a network setting with only manifest variables, i.e. variables that are external or measured,  we are interested in the interconnection of several dynamical systems.  

\begin{defn}[\cite{willems1991paradigms}]
Consider $N$ dynamical systems $\Sigma_{i}=(\mathbb{T},\mathbb{W}, \mathcal{B}_i ) $ for $i \in  \mathbb{I}_N$, then their interconnected system is defined as $\Sigma = \land_{i=1}^N \Sigma_i \triangleq (\mathbb{T},\mathbb{W},\cap_{i=1}^N \mathcal{B}_i ) $, and $\Sigma_i$ is called a component of $\Sigma$.
\end{defn}

The definition of interconnected systems, also called dynamical networks, considers the total interconnection of dynamical systems, i.e., all $\Sigma_i$ have the same signal space. It covers the special cases where $\Sigma_i$ have different signal spaces, i.e., the so-called partial interconnection, as shown in the following example from \cite{willems1997interconnections}: The interconnection of $\Sigma_1 = (\mathbb{T}, \mathbb{W}_1 \times \mathbb{W}_2 , \mathcal{B}_1)$ and $\Sigma_2 = (\mathbb{T}, \mathbb{W}_2 \times \mathbb{W}_3 , \mathcal{B}_2)$ is equivalent to the interconnection of $\bar{\Sigma}_1 = (\mathbb{T}, \mathbb{W}_1 \times \mathbb{W}_2 \times \mathbb{W}_3, \mathcal{B}_1 \times (\mathbb{W}_3)^{\mathbb{T}})$ and $\bar{\Sigma}_2 = (\mathbb{T}, \mathbb{W}_1 \times \mathbb{W}_2 \times \mathbb{W}_3, (\mathbb{W}_1)^{\mathbb{T}} \times \mathcal{B}_2)$.
 
In this work, we will introduce a binary matrix and graphical representations to encode the above structure of a partial interconnection. To this end, we first define the following concept:
\begin{defn}
Given a dynamical system\footnote{We write the signal space as a product to decompose the signal vector $w$ into $L$ subvectors, i.e., $w = w_{ \mathbb{I}_L}= \mathrm{col}(w_1,\dots,w_L)$.} $\Sigma = (\mathbb{T},\prod_{ j \in \mathbb{I}_L}\mathbb{W}_j,  \mathcal{B} )$, the signal $w_j$ with  $j \in \mathbb{I}_L$ is said being \textit{unconstrained} in $\Sigma$ if $\mathcal{B} = \{w_{ \mathbb{I}_L}  \in (\prod_{ j \in \mathbb{I}_L}\mathbb{W}_j)^{\mathbb{T}}  \mid w_j \in (\mathbb{W}_j )^{\mathbb{T}}, w_{\mathbb{I}_L \setminus \{j\} } \in \pi_{\mathbb{I}_L \setminus \{j\} }(\mathcal{B})\} $.
\end{defn} 

Signal $w_j$ being unconstrained means that $\forall$ $w_j \in (\mathbb{W}_j )^{\mathbb{T}}$ and $\forall$ $ w_{\mathbb{I}_L \setminus \{j\} } \in \pi_{\mathbb{I}_L \setminus \{j\} }(\mathcal{B})$ lead to $w_{ \mathbb{I}_L} \in \mathcal{B}$. This is a stronger requirement than requiring $w_j$ to be \textit{free} \cite{willems1991paradigms}, i.e., $\forall$ $w_j \in (\mathbb{W}_j )^{\mathbb{T}}$, $\exists $  $w_{\mathbb{I}_L \setminus \{j\} } \in \pi_{\mathbb{I}_L \setminus \{j\} }(\mathcal{B})$ such that $w_{ \mathbb{I}_L} \in \mathcal{B}$. 

When the Euclidean space is considered as the signal space, we have the following algebraic characterization of the unconstrained signals:
\begin{lem} \label{lem:Uncons}
Given a dynamical system $\Sigma = \Big(\mathbb{Z}_+, \prod_{ j \in \mathbb{I}_L } \mathbb{R}^{q_j},  \mathcal{B} \Big) \in \mathcal{L}^q$ , then $w_j $ is unconstrained if and only if $R_j =0$ for any kernel representation $ [R_1\text{ }R_2 \text{ }\cdots R_L ] $ of $\Sigma$, where $R_j \in \mathbb{R}^{\bullet \times q_j}[s]$.
\end{lem}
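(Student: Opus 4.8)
The plan is to establish the two implications separately, working directly from the defining identity $\mathcal{B} = \ker(R)$ of a kernel representation together with the substitution structure built into the notion of being unconstrained. The only substantial algebraic ingredient is that a polynomial difference operator annihilating every admissible sequence must have vanishing coefficients, which I isolate at the end.

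For the direction assuming $R_j = 0$, I fix a kernel representation $R = [\,R_1\ \cdots\ R_L\,]$ with $R_j = 0$; such a representation exists because $\Sigma \in \mathcal{L}^q$ admits one and, by hypothesis, every kernel representation has this property. The equation $R(\sigma)w = 0$ then reduces to $\sum_{k \neq j} R_k(\sigma) w_k = 0$, which does not involve $w_j$. The inclusion of $\mathcal{B}$ in the set appearing in the definition of unconstrained is immediate. For the reverse inclusion, given any $w_{\mathbb{I}_L \setminus \{j\}} \in \pi_{\mathbb{I}_L \setminus \{j\}}(\mathcal{B})$ and any $w_j$, the block $w_{\mathbb{I}_L \setminus \{j\}}$ is the projection of some element of $\mathcal{B} = \ker(R)$ and hence already satisfies $\sum_{k\neq j} R_k(\sigma) w_k = 0$; appending an arbitrary $w_j$ preserves this equation, so the resulting trajectory lies in $\ker(R) = \mathcal{B}$. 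Thus $w_j$ is unconstrained.

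For the converse, I take an arbitrary kernel representation $R = [\,R_1\ \cdots\ R_L\,]$ and show $R_j = 0$. Since $\mathcal{B}$ is linear, the zero trajectory lies in $\mathcal{B}$, so its restriction to the blocks other than $j$ belongs to $\pi_{\mathbb{I}_L \setminus \{j\}}(\mathcal{B})$. As $w_j$ is unconstrained, gluing an arbitrary $v_j \in (\mathbb{R}^{q_j})^{\mathbb{Z}_+}$ onto these zero blocks yields a trajectory in $\mathcal{B} = \ker(R)$, whence $R_j(\sigma) v_j = 0$ for every $v_j$. It then remains to conclude $R_j = 0$: writing $R_j(s) = \sum_{i=0}^{d} A_i s^i$ with real coefficient matrices $A_i$, the relation evaluated at $t = 0$ gives $\sum_{i=0}^d A_i v_j(i) = 0$ for all $v_j$, and choosing $v_j$ supported at a single instant $k$ with value a standard basis vector isolates each column of each $A_k$, forcing all $A_k = 0$. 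Because the argument fixes no particular representation, $R_j = 0$ holds for every kernel representation.

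I expect this final step --- passing from vanishing of the induced operator on all sequences to vanishing of the polynomial matrix itself --- to be the only place requiring care, and it is handled cleanly by testing against discrete impulses; the remainder is bookkeeping with the partitioned kernel equation.
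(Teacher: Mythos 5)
Your proof is correct and follows essentially the same approach as the paper: both directions hinge on gluing an arbitrary trajectory in the $j$-th block onto the zero trajectory in the remaining blocks and invoking $0 \in \mathcal{B}$. The paper phrases the ``only if'' part as a contradiction and simply asserts that $R_j \neq 0$ implies $R_j(\sigma)v_j \neq 0$ for some $v_j$, whereas you prove the contrapositive directly and supply the impulse-testing argument for that assertion --- a welcome detail, but not a different route.
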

\begin{proof}
The ``if" part is straightforward. We can prove the ``only if" part by contradiction as follows. Firstly, note that $0 \in  B= \mathrm{ker}\big([R_1\text{ }R_2 \text{ }\cdots R_L ]\big) $, and thus $w_{\mathbb{I}_L \setminus \{j\} } = 0 \in \pi_{\mathbb{I}_L \setminus \{j\} }(\mathcal{B})$. Assume that $R_j\not=0$, and consider $[R_1(\sigma)\text{ }\dots R_j(\sigma) \dots R_L(\sigma) ] w = 0$. There must exist $w_j \not=0$ such that $R_j(\sigma)w_j \not=0$. Then, it is clear $w = \mathrm{col}(0,\dots,0,w_j,0,\dots,0) \notin \mathcal{B}$, which contradicts $w_j$ being unconstrained.
\end{proof}

As shown in the above lemma, a signal vector being unconstrained simply means that the corresponding block matrix in the kernel representation is zero. If we consider a linear system $\Sigma = (\mathbb{Z}_+, \mathbb{R}^{q_1}\times \mathbb{R}^{q_2}\times \mathbb{R}^{q_3}, \mathcal{B}) \in \mathcal{L}^q$, then $w_2$ is unconstrained iff $\Sigma$ admits a kernel representation as 
$$
\begin{bmatrix}
 R_1(\sigma) & 0 & R_3(\sigma)  
\end{bmatrix} \begin{bmatrix}
    w_1 \\
    w_2\\
    w_3
\end{bmatrix} = 0,
$$
where $w_i \in (\mathbb{R}^{q_i})^{\mathbb{Z}_+}$. The above sparsity pattern of the kernel representation is invariant after an equivalent transformation, i.e., the pre-multiplication of a unimodular matrix.

With the above concept, we can characterize the sparsity pattern of interconnected systems by a binary matrix:
\begin{defn} \label{eq:Incidence}
Consider $N$ dynamical systems $\Sigma_i = (\mathbb{T},\prod_{ j \in \mathbb{I}_L}\mathbb{W}_j,  \mathcal{B}_i )$ for $i \in \mathbb{I}_N$, the 
 \textit{incidence matrix} $S \in \{0,1\}^{N \times L}$ of the interconnected system $\land_{i=1}^N \Sigma_i $ is defined as $S_{ij}=0$ iff $w_j$ is unconstrained in $\Sigma_i$. 
\end{defn}

An incidence matrix $S$ of an interconnected system contains important structural information: Its number of rows equals the number of components, and its number of columns equals the number of signal vectors. As an example, consider an interconnected system with $4$ components $\Sigma_i = ( \mathbb{Z}_+, \prod_{j=1}^4 \mathbb{R}^{q_j},  \mathcal{B}_i) \in \mathcal{L}^q $ for $i \in \mathbb{I}_4 $ and an incidence matrix 
\begin{equation} \label{eq:Sexam}  
S = \begin{bmatrix}
 1 & 1 & 1& 0 \\
 0 & 1 & 0& 1\\
 0& 0 & 1 & 1 \\
  0& 0 & 1 & 1
\end{bmatrix}.
\end{equation}
The above incidence matrix implies that $\land_{i=1}^4 \Sigma_i$ admits a kernel representation 
\begin{equation} \label{eq:exam1} 
\begin{bmatrix}
 R_{11}(\sigma) & R_{12}(\sigma) &  R_{13}(\sigma) & 0 \\
 0 & R_{22}(\sigma) &0 & R_{24}(\sigma)\\
 0 & 0 & R_{33}(\sigma) & R_{34}(\sigma) \\
  0 & 0 & R_{43}(\sigma) & R_{44}(\sigma) 
\end{bmatrix}\begin{bmatrix}
    w_1 \\
    w_2\\
    w_3 \\
    w_4
\end{bmatrix}=0,
\end{equation}
where the $i$-th block row is a kernel representation of $\Sigma_i$. In \eqref{eq:exam1}, $w_i$ can be scalar-valued or vector-valued, and each row can be a block row or one-dimensional. 
\begin{rem}
The incidence matrix of \eqref{eq:exam1} is invariant after the equivalent transformation of each component, i.e., pre-multiplication of \eqref{eq:exam1} by a block diagonal unimodular matrix. However, it may change after the equivalent transformation of the complete network, i.e., pre-multiplication of \eqref{eq:exam1} by a general unimodular matrix. 
\end{rem}

 We summarize the above point and the previous discussions into the following result:
\begin{coro}
Consider discrete-time systems $\Sigma_i  = ( \mathbb{Z}_+,\prod_{ j \in \mathbb{I}_L } \mathbb{R}^{q_j},  \mathcal{B}_i \big) \in \mathcal{L}^{q}  $ for $i\in \mathbb{I}_N$ and a matrix $S \in \{0,1\}^{N \times L}$, the following statements are equivalent:
\begin{enumerate}[label=(\alph*)]
    \item The matrix $S$ is an incidence matrix of $\land_{i=1}^N \Sigma_i$;

    \item It holds that $S_{ij} = 0$ iff $R_{ij}=0$ for any kernel representation $ \begin{bmatrix}
        R_{i1} & R_{i2} &\dots & R_{iL}
    \end{bmatrix} $ of $\Sigma_i$, where $R_{ij} \in \mathbb{R}^{\bullet \times q_j}[s]$.
\end{enumerate}
\end{coro}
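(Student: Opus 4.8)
The plan is to prove the equivalence by reducing it directly to the definition of the incidence matrix (Definition~\ref{eq:Incidence}) and then applying Lemma~\ref{lem:Uncons} to each component separately. By Definition~\ref{eq:Incidence}, $S$ is an incidence matrix of $\land_{i=1}^N \Sigma_i$ precisely when, for every $i \in \mathbb{I}_N$ and every $j \in \mathbb{I}_L$, we have $S_{ij}=0$ if and only if $w_j$ is unconstrained in $\Sigma_i$. Thus the entire statement will follow once I establish, for each fixed $i$, that ``$w_j$ is unconstrained in $\Sigma_i$ for all $j$ with $S_{ij}=0$, and constrained otherwise'' is the same condition as statement (b) for that row.

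First I would fix an arbitrary component $\Sigma_i \in \mathcal{L}^q$ and apply Lemma~\ref{lem:Uncons} to it. The lemma states that $w_j$ is unconstrained in $\Sigma_i$ if and only if $R_{ij}=0$ for \emph{any} (equivalently, every) kernel representation $[R_{i1}\ R_{i2}\ \cdots\ R_{iL}]$ of $\Sigma_i$. Note that the lemma is stated for a single system whose signal space is a product $\prod_{j\in\mathbb{I}_L}\mathbb{R}^{q_j}$, which is exactly the situation of each $\Sigma_i$ here, so it applies verbatim to each component. Chaining the two biconditionals gives, for each pair $(i,j)$: $S_{ij}=0 \iff w_j$ is unconstrained in $\Sigma_i \iff R_{ij}=0$ for any kernel representation of $\Sigma_i$. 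Ranging over all $i\in\mathbb{I}_N$ and $j\in\mathbb{I}_L$ yields the equivalence of (a) and (b).

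The one point requiring care, which I expect to be the main (though modest) obstacle, is the quantifier on ``any kernel representation'' and its consistency across the equivalence. Lemma~\ref{lem:Uncons} already guarantees that the vanishing of the block $R_{ij}$ is independent of which kernel representation of $\Sigma_i$ one picks, since pre-multiplication by a unimodular matrix preserves the zero-block pattern of a single component (this is exactly the invariance remarked after the lemma). Therefore statement (b)'s phrasing ``for any kernel representation'' is well-posed: if $R_{ij}=0$ holds for one kernel representation it holds for all, and vice versa. I would make this explicit so the reader sees that the block-wise correspondence is genuinely a property of $\Sigma_i$ and not an artifact of a representation choice. With that observation in place, the proof is a direct two-line substitution, and I would present it as such rather than expanding any calculation.
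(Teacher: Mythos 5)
Your proof is correct and follows exactly the route the paper intends: the corollary is stated there as a direct summary of Definition~\ref{eq:Incidence} combined with Lemma~\ref{lem:Uncons} applied component-wise, together with the unimodular-invariance remark that makes the phrase ``for any kernel representation'' well-posed. Your explicit chaining of the two biconditionals, and your care about the quantifier, is precisely that argument spelled out.
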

 
\begin{rem}
 Given a system $\Sigma \in \mathcal{L}^q$, it can be decomposed as an interconnection of components, which is analogous to the step of zooming in modeling \cite{willems1991paradigms}. However, this decomposition is typically non-unique.  
\end{rem}
 \subsection{Graphical representation}
Models of networks are typically associated with a graphical representation, e.g., directed or undirected graphs. While the graphical visualization of interconnected systems has been used extensively in the behavioral theory \cite{ willems2007behavioral, shali2022composition, steentjes2022canonical}, it is typically not formalized in terms of graph theory. Graphical representations are considered in \cite{willems1991paradigms, bisiacco2016consensus} but with pre-determined inputs and outputs. Other graphical representations are mainly motivated by the modeling of electrical circuits \cite{willems2010terminals,hughes2017controllability} and are hard to connect to graphs of other linear network models.
 
 Since an incidence matrix $S$ has a one-to-one correspondence with the so-called hypergraph \cite{schrijver2003combinatorial}, we formalize the graphical representation of an interconnected system in this subsection. 

\begin{defn}[\cite{schrijver2003combinatorial}]
A \textit{hypergraph} is a pair $(\mathcal{V}, \mathcal{E})$ with a finite set $\mathcal{V}$ and a multiset \footnote{A multiset is a collection of elements with possible repeated elements.} $\mathcal{E} $ that contains subsets of $\mathcal{V}$.
\end{defn}

Each element in $\mathcal{V}$ is called a \textit{vertex} and each element in $\mathcal{E}$ is called a \textit{net} or a \textit{edge}. The nets are unordered sets and thus undirected. One net can contain multiple vertices or a single vertex, in contrast to standard graphs where an edge always connects two vertices\footnote{The edge set of a standard graph is typically a subset of $\mathcal{V} \times \mathcal{V}$ \cite{schrijver2003combinatorial}.}. This modeling feature is useful when cluster-wise interactions are of interest, e.g., in modeling consensus phenomena or contagion processes \cite{majhi2022dynamics}.

Any $S \in \{0,1\}^{N \times L} $ uniquely specifies a hypergraph via a function $\mathcal{G}(S) = ( \mathcal{V} , \mathcal{E})$, where
\begin{equation} \label{eq:InciGraph}
\mathcal{V}= \mathbb{I}_L, \text{ and } \mathcal{E}=\{E(i) \mid i \in \mathbb{I}_N \},
\end{equation}
and
$ E(i) \triangleq \{k \in \mathbb{I}_L \mid S_{ik} = 1 \}.$
The multiset $\mathcal{E}$ in \eqref{eq:InciGraph} contains repeated elements if $S$ contains identical rows. The \textit{dual hypergraph} of $\mathcal{G}(S)$ is defined as $\mathcal{G}(S^\top)$, where edges in $\mathcal{G}(S)$ become vertices in $\mathcal{G}(S^\top)$.

Given an incidence matrix $S$ of an interconnected system and the hypergraph $\mathcal{G}(S)$, each vertex in $\mathcal{G}(S)$ denotes a signal vector and each edge denotes a component of the interconnected system. This is in contrast to the standard graphical visualization in the behavioral theory \cite{willems1997interconnections, shali2022composition, steentjes2022canonical}, where components are drawn as vertices and signals are drawn as edges. This type of graph can actually be obtained as the dual hypergraph $\mathcal{G}(S^\top)$. We distinguish the above two different graphical representations in the following definition:
\begin{defn} \label{def:SysSigGraph}
Given an incidence matrix $S \in \{0,1\}^{N \times L}$ of an interconnected system $\land_{i=1}^N \Sigma_i$, where $\Sigma_i = (\mathbb{T},\prod_{ j \in \mathbb{I}_L}\mathbb{W}_j,  \mathcal{B}_i )$, the hypergraph $\mathcal{G}(S)$ is called the \textit{signal graph} of $\land_{i=1}^N \Sigma_i$, and the dual hypergraph $\mathcal{G}(S^\top)$ is called the \textit{system graph} of $\land_{i=1}^N \Sigma_i$.
\end{defn}

\begin{exam}
Consider the interconnected system \eqref{eq:exam1} of $4$ components and its incidence matrix \eqref{eq:Sexam}. The incidence matrix leads to a signal graph $(\mathcal{V},\mathcal{E})$, with $\mathcal{V} = \mathbb{I}_4$ and the multiset $\mathcal{E} = \{ \mathbb{I}_3, \{2,4\}, \{3,4\},\{3,4\} \}.$ The signal graph $\mathcal{G}(S)$ is shown in Fig.~\ref{fig:exam0}(a), where we label vertex $i$ as $w_i$ equivalently, and each net represents one row of $S$ and thus a component $\Sigma_i$. A net is visualized as a cluster in the signal graph, while a two-vertex net can also be visualized as an undirected edge. Note that the net $\{3,4\}$ appears twice in $\mathcal{E}$ due to the last two identical rows in \eqref{eq:Sexam}. The system graph $\mathcal{G}(S^\top)$ has $\mathcal{V} = \mathbb{I}_4$ and $\mathcal{E} = \{ \{1\}, \{1,2 \}, \{1,3,4\},\{2,3,4\}\}$. It is visualized in Fig.~\ref{fig:exam0}(b), where we have labelled the vertex $i \in \mathcal{V}$ as $\Sigma_i$. The nets represent the rows of $S^\top $ and thus the signals. They are drawn as edges, following the conventional visualization in the behavioral theory \cite{willems1997interconnections, shali2022composition, steentjes2022canonical}. For example, the net $\{2,3,4\}$ represents the signal $w_4$ and is drawn as an edge connecting the components $\Sigma_2$, $\Sigma_3$, and $\Sigma_4$.
\begin{figure}[h]
\begin{minipage}{0.23\textwidth}
\centering
\includegraphics[scale=0.4]{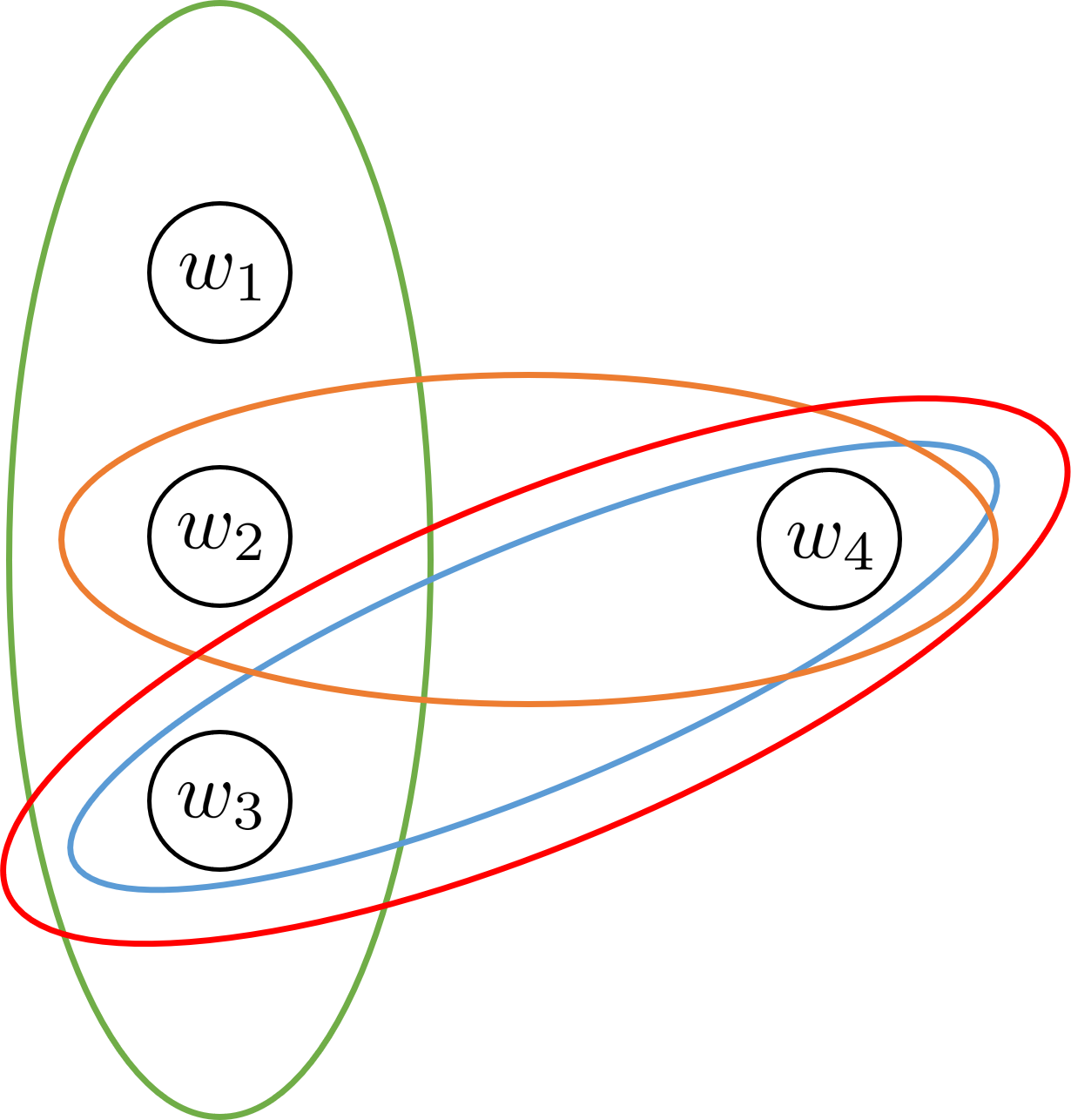}
\\(a)
\end{minipage}
\begin{minipage}{0.23\textwidth}
\vspace{0.5cm}
\centering
\includegraphics[scale=0.4]{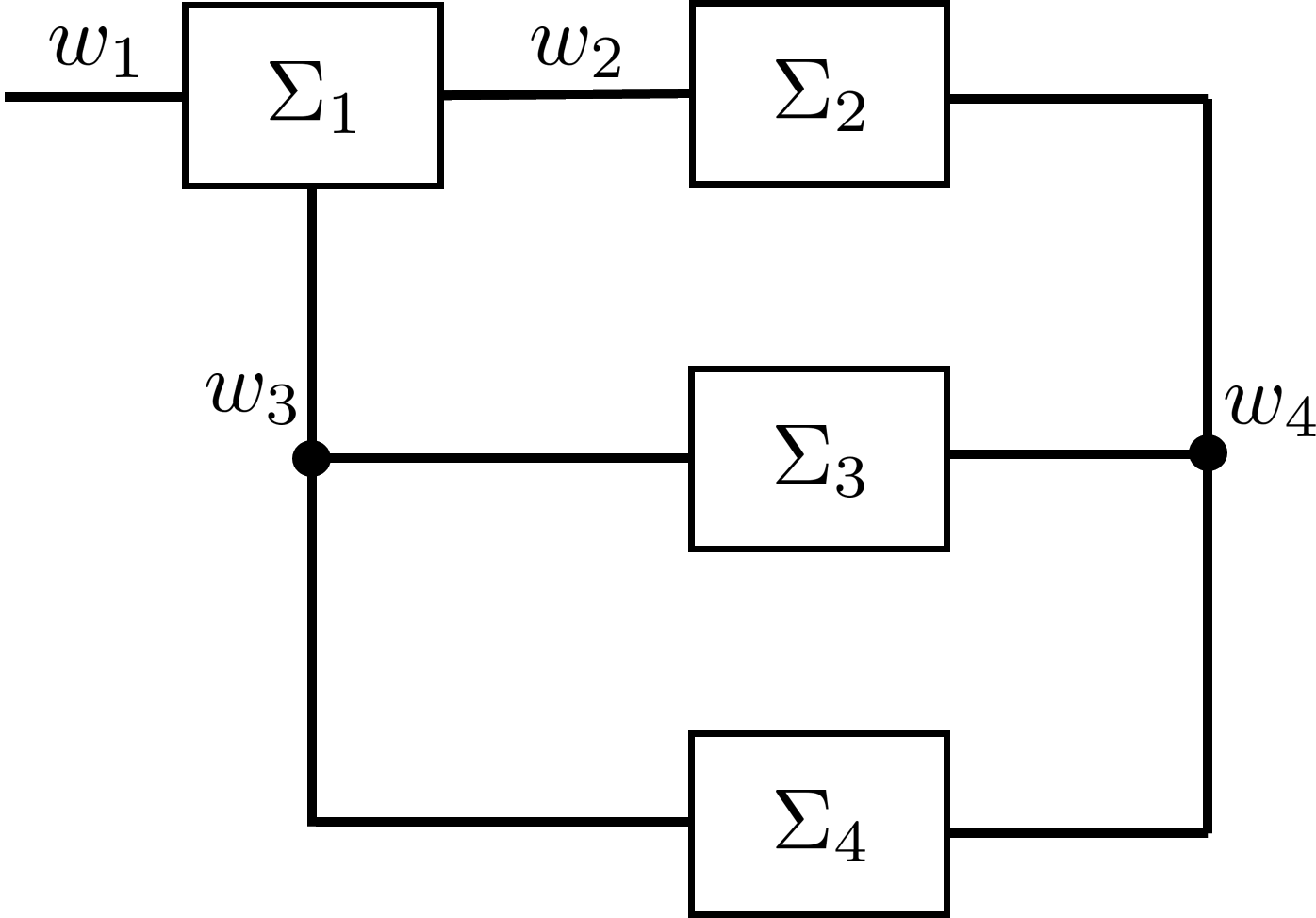}
\vspace{0.1cm}
\\(b)
\end{minipage}
\caption{Two graphical representations of \eqref{eq:exam1} are shown. The signal graph in (a) represents the $4$ signal vectors as $4$ vertices and the $4$ components as $4$ nets/edges. The dual graph in (b) represents the $4$ components as vertices and the $4$ signals as nets, visualized as edges by convention. }
\label{fig:exam0}
\end{figure}
\end{exam}

While block diagrams like system graphs and Fig.~\ref{fig:exam0}(b) are commonly used in systems and control theory, signal-based graphs like Fig.~\ref{fig:exam0}(a) are typically used in system identification \cite{shi2020generic} and machine learning \cite{hyvarinen2010estimation}. The duality in Definition~\ref{def:SysSigGraph} will be essential to connect the above different network models from various domains. A preliminary study of this direction will be given later in Section~\ref{sec:VARX}.
\begin{rem}
In \cite{chetty2015network}, relevant concepts called ``subsystem structure" and ``signal structure" are discussed, where the first one is associated with the state-space representation and the second one is associated with models containing transfer functions. By contrast, the dual graphs in this work are representations of kernel representations. 
\end{rem}

\subsection{Regularity of interconnection}
For the interconnection of dynamical systems, the concept of regularity has been studied in the behavioral theory \cite{willems1991paradigms,willems1997interconnections}. The regularity is typically studied for the interconnection of two dynamical systems in the context of control, i.e., the interconnection of a plant and a controller \cite{willems1997interconnections,julius2005canonical,trentelman2011optimal} or the interconnection of two local controllers in distributed control \cite{steentjes2022canonical}. In this work, we focus on the regularity of the interconnection of an arbitrary finite number of dynamical systems, as shown in the following definition.
\begin{defn}
Given $\Sigma_i \in \mathcal{L}^{q}$ for $i \in \mathbb{I}_N$, the interconnection $\Sigma= \land_{i=1}^N \Sigma_i$ is a \textit{regular interconnection} if 
$
p(\Sigma)= \sum_{i=1}^N p(\Sigma_i).
$
It is called a \textit{regular feedback interconnection} if it is a regular interconnection and $n(\Sigma) = \sum_{i=1}^N n(\Sigma_i).$
\end{defn}

A regular interconnection requires each component to have distinct and independent outputs. It also means that each component has independent physical constraints/laws to characterize its behavior. This interpretation is from the fact that, an interconnection $ \land_{i=1}^N \Sigma_i $ is regular iff $\mathrm{col}(R_1, \dots, R_N)$ has full row rank, where $R_i$ is a minimal kernel representation of $\Sigma_i$.

The regular interconnection of a plant and a controller is shown to be equivalent to a feedback connection, where the controller is a possibly non-proper input-output system \cite{willems1997interconnections}. To ensure that the controller is proper, a regular feedback interconnection is needed \cite{willems1997interconnections}. The regular feedback interconnection is defined by the condition of regular interconnection and the additional condition that the McMillan degrees of components respect the additive rule. The latter condition requires the components to have independent state variables. This is commonly satisfied by networks of physically decoupled systems, e.g., multiple robots, but may be violated by physical coupling, where states of different components are coupled via algebraic constraints, as shown later in Example~\ref{exm:circuit}. 

With the above concepts, the following two results are straightforward extensions of \cite[Th. 8 and 12]{willems1997interconnections} from two components to a finite number of components.
\begin{coro}
     Given an interconnected system $\Sigma = \land_{i=1}^N \Sigma_i$ with $\Sigma_i \in \mathcal{L}^{q}$ for all $i \in \mathbb{I}_N$, then its interconnection is a regular interconnection if and only if \footnote{The ``if" part is not stated in \cite{willems1997interconnections} but holds trivially. This also holds similarly for Corollary~\ref{lem:RegularNetwork}. } the signal vector $w$ of $\Sigma$ admits a component-wise partition $\mathrm{col}(y_1,\dots,y_N,u)$, where $y_i \in (\mathbb{R}^{p(\Sigma_i)})^{\mathbb{T}}$, such that
 \begin{itemize}
     \item in $\Sigma$, $\mathrm{col}(u,\mathrm{col}(y_1,\dots,y_N  ))$ is a proper input-output partition;

     \item in $\Sigma_i$, $\mathrm{col}\big(\mathrm{col}(u, y_1,\dots, y_{i-1}, y_{i+1},\dots y_N  ),y_i\big)$ is an input-output partition, for all $i \in \mathbb{I}_N$.
 \end{itemize}
\end{coro}

The above result states that if the interconnection is regular, the interconnected system can be regarded as an interconnection of (possibly non-proper) input-output systems with output-to-input interconnections: Each component receives the outputs of other components as its inputs. This is a fundamental property of many input-output network models.

To further ensure that all the input-output components are proper, a regular feedback interconnection should be considered:
\begin{coro}\label{lem:RegularNetwork}
 Given $\Sigma = \land_{i=1}^N \Sigma_i$ with $\Sigma_i \in \mathcal{L}^{q}$ for all $i \in \mathbb{I}_N$, then its interconnection is a regular feedback interconnection if and only if the signal vector $w$ of $\Sigma$ admits a component-wise partition $\mathrm{col}(y_1,\dots,y_N,u)$, where $y_i \in (\mathbb{R}^{p(\Sigma_i)})^{\mathbb{T}}$, such that
 \begin{itemize}
     \item in $\Sigma$, $\mathrm{col}(u, \mathrm{col}(y_1,\dots,y_N  ))$ is a proper input-output partition;

     \item in $\Sigma_i$, $\mathrm{col}\big(  \mathrm{col}(u, y_1,\dots, y_{i-1}, y_{i+1},\dots y_N  ),y_i\big)$ is a proper input-output partition, for all $i \in \mathbb{I}_N$.
 \end{itemize}
\end{coro}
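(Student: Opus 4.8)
The plan is to reduce the two properness claims, for $\Sigma$ and for each $\Sigma_i$, to statements about the degrees of maximal minors, and then to exploit the McMillan-degree additivity that \emph{defines} a regular feedback interconnection. Let $R_i \in \mathbb{R}^{p(\Sigma_i) \times q}[s]$ be a minimal kernel representation of $\Sigma_i$. Since the interconnection is regular, $R \triangleq \mathrm{col}(R_1, \dots, R_N)$ has full row rank $\sum_i p(\Sigma_i) = p(\Sigma)$ and is therefore a minimal kernel representation of $\Sigma$. Recall from the preliminaries that, for a minimal kernel representation, selecting as output the signals indexed by a set $J$ with $|J|$ equal to the rank and taking the corresponding square submatrix of maximal minor degree (i.e.\ with $\mathrm{deg}(\det)$ equal to the McMillan degree) yields a \emph{proper} input-output partition. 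Hence the whole statement follows once I exhibit disjoint index sets $J_1, \dots, J_N$ with $|J_i| = p(\Sigma_i)$ such that (i) $R_i$ restricted to the columns $J_i$ has determinant of degree $n(\Sigma_i)$, and (ii) $R$ restricted to the columns $J \triangleq \bigcup_i J_i$ has determinant of degree $n(\Sigma)$; then $y_i \triangleq w_{J_i}$ and $u \triangleq w_{\mathbb{I}_q \setminus J}$ give the required component-wise partition $\mathrm{col}(y_1, \dots, y_N, u)$.

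The key computational tool is the generalized Laplace expansion of $\det R_{\cdot, J}$ along the row blocks induced by $R_1, \dots, R_N$. For any column set $J$ with $|J| = p(\Sigma)$,
\[
\det R_{\cdot, J} = \sum_{J = J_1 \sqcup \dots \sqcup J_N,\ |J_i| = p(\Sigma_i)} \varepsilon(J_1, \dots, J_N) \prod_{i=1}^N \det (R_i)_{\cdot, J_i},
\]
where the sum ranges over all ways of distributing the columns $J$ into blocks of the prescribed sizes and $\varepsilon$ is the associated sign. Since $\mathrm{deg}(\det (R_i)_{\cdot, J_i}) \le n(\Sigma_i)$ for every block, each summand has degree at most $\sum_i n(\Sigma_i)$, whence $n(\Sigma) = \max_J \mathrm{deg}(\det R_{\cdot, J}) \le \sum_i n(\Sigma_i)$. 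This inequality holds for every regular interconnection and is saturated exactly under the feedback hypothesis.

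Now I invoke the regular feedback condition $n(\Sigma) = \sum_i n(\Sigma_i)$. Choosing $J^*$ to attain the maximum, $\mathrm{deg}(\det R_{\cdot, J^*}) = \sum_i n(\Sigma_i)$, so the coefficient of $s^{\sum_i n(\Sigma_i)}$ in the Laplace sum is nonzero. A summand can reach this top degree only if \emph{every} factor $\det (R_i)_{\cdot, J_i}$ simultaneously attains its own maximal degree $n(\Sigma_i)$, because each factor is bounded by $n(\Sigma_i)$ and the degrees must add up to $\sum_i n(\Sigma_i)$. Since the top-degree coefficient does not vanish, at least one distribution $J^* = J_1^* \sqcup \dots \sqcup J_N^*$ survives cancellation, and for it each $\det (R_i)_{\cdot, J_i^*}$ is nonzero of degree exactly $n(\Sigma_i)$. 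These $J_i^*$ satisfy (i) and (ii), so $y_i = w_{J_i^*}$ is a proper output for $\Sigma_i$ (the remaining signals forming the free input), while $\mathrm{col}(y_1, \dots, y_N) = w_{J^*}$ is a proper output for $\Sigma$ with free input $u$.

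I expect the main obstacle to be the cancellation step: the generalized Laplace expansion is a \emph{signed} sum, so a priori the top-degree contributions of different distributions could cancel, and one must argue that the equality $n(\Sigma) = \sum_i n(\Sigma_i)$ rules this out by forcing the $s^{\sum_i n(\Sigma_i)}$-coefficient of the chosen maximal minor to be nonzero, hence guaranteeing a \emph{single} distribution in which all components are simultaneously at maximal degree. The remaining bookkeeping — that the $J_i^*$ are disjoint of the correct sizes, that the leftover indices yield a free input $u$, and that maximal minor degree entails properness in each representation — is routine given the preliminaries, so the argument parallels the regular-interconnection case (Corollary~3.1) with the McMillan-degree analysis as the only genuinely new ingredient.
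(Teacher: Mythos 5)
Your proof is correct. It is worth noting, however, that the paper does not actually supply a proof of this corollary: it declares it (together with the preceding corollary on regular interconnections) a ``straightforward extension'' of Theorems 8 and 12 of Willems' 1997 paper on interconnections from two components to $N$ components, and the related Proposition~4.1 likewise defers to those proofs, merely recording the conclusion that one can select submatrices $X_i$ with $\deg(X_i)=n(\Sigma_i)$ and $\deg(\det(X))=\sum_i n(\Sigma_i)$. Your generalized Laplace expansion argument supplies precisely the missing mechanism, made self-contained and valid for $N$ components directly: the blockwise bound $\deg \det (R_i)_{\cdot,J_i}\le n(\Sigma_i)$ yields $n(\Sigma)\le \sum_i n(\Sigma_i)$ for any regular interconnection, and the feedback hypothesis forces, for any column set $J^*$ attaining the McMillan degree of $R$, at least one distribution $J^*=J_1^*\sqcup\dots\sqcup J_N^*$ in which every $\det (R_i)_{\cdot,J_i^*}$ has degree exactly $n(\Sigma_i)$; simultaneous properness for $\Sigma$ and each $\Sigma_i$ then follows from the fact, stated in the paper's preliminaries, that a maximal-degree minor yields a proper input-output partition. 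Since this is the same mechanism underlying Willems' two-component result, your route is not conceptually different, but it is the explicit argument the paper omits. One remark: the ``cancellation'' obstacle you flag is not actually a danger in the direction you need --- if no distribution had all factors at maximal degree, the coefficient of $s^{\sum_i n(\Sigma_i)}$ in $\det R_{\cdot,J^*}$ would vanish outright, contradicting the choice of $J^*$; cancellation among top-degree distributions could only obstruct the converse implication, which the corollary does not assert.
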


Regularity of interconnections is a fundamental property in network models with output-to-input interconnections. It is typically assumed implicitly when a model is chosen. We will discuss this point for SVAR models in the next section.
\subsection{Lack of regularity}
Despite the existing discussions and applications of regular interconnections \cite{willems1997interconnections,bisiacco2016consensus,steentjes2022canonical}, there seems a lack of discussions on the situations where interconnections are not regular. It is easy to find real-world networks with non-regular (feedback) interconnections. Thus, it is of interest to investigate how to handle non-regular interconnections and to ask whether network models with output-to-input interconnections can still be applied to these situations. We motivate these discussions in the following example:
\begin{exam} \label{exm:circuit}
For the simple circuit in Fig.~\ref{fig:circuit}, we choose $V$, $V_C$, $I_1$, and $I_2$ to be the manifest variables, which denote the voltage across the circuit, the voltage across the capacitors, the current through the capacitor $C_1$, and the current through $C_2$, respectively. Let $d/dt$ denote the differentiation operator\footnote{The continuous-time setting does not affect the discussion here.}, and the models are $\Sigma_1: V - V_C = (d/dt) 	L (I_1+I_2) $, $\Sigma_2: I_1 =  (d/dt) C_1 V_C $, and $\Sigma_3: I_2 =(d/dt) C_2  V_C $.
\begin{figure}
    \centering
    \includegraphics[width=0.25\textwidth]{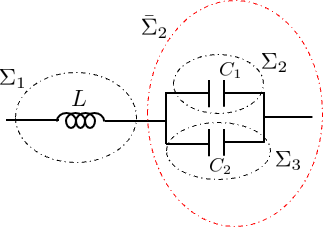}
    \caption{The interconnection $\land_{i=1}^3 \Sigma_i$ is not a regular feedback interconnection; however, if merging $\bar{\Sigma}_2 =\Sigma_2\land \Sigma_3$ into a single component, $\Sigma_1 \land \bar{\Sigma}_2$ becomes a regular feedback interconnection of two components.}
    \label{fig:circuit}
\end{figure}
We have $p(\Sigma_i) =1 $ for all $i \in \mathbb{I}_3$, and it is easy to verify $p( \land_{i=1}^3 \Sigma_i   ) =3$. This shows $p( \land_{i=1}^3 \Sigma_i  ) = \sum_{i=1}^ 3p(\Sigma_i)=3$, and thus, $\land_{i=1}^3 \Sigma_i$ is a regular interconnection. However, $n(\Sigma_i) = 1 $ for $i \in \mathbb{I}_3$, and it is easy to show $\sum_{i=1}^3 n(\Sigma_i) = 3 > n( \land_{i=1}^3 \Sigma_i)=2$. Therefore, $\land_{i=1}^3 \Sigma_i$ is not a regular feedback interconnection. This is because $\Sigma_2$ and $\Sigma_3$ share the same state $V_C$.

However, if we regard $\bar{\Sigma}_2 = \Sigma_2 \land \Sigma_3$ as a single component, then $p(\Sigma_1 \land \bar{\Sigma}_2 ) = p(\Sigma_1)+p(\bar{\Sigma}_2) =3 $ and $n(\bar{\Sigma}_2) + n(\Sigma_1) = n(\Sigma_1 \land \bar{\Sigma}_2 )=2$. This shows that $\Sigma_1 \land \bar{\Sigma}_2 $ is a regular feedback interconnection of two components, where $V_C$ and $I_2$ can be chosen as the outputs of $\bar{\Sigma}_2$ and are also inputs to $\Sigma_1$. Here, we obtain an output-to-input interconnection between $\bar{\Sigma}_2$ and $\Sigma_1$. \hfill $\blacksquare$
\end{exam} 

As shown in the above example, the lack of regular feedback interconnection is typically caused by static mappings between state variables of different components. One way to resolve the lack of regularity is by grouping some components into a single component, thus leading to the loss of structural information of the network. We can formalize this in the following result:
\begin{prop}
Given $\Sigma = \land_{i=1}^N \Sigma_i$ with $\Sigma_i \in \mathcal{L}^{q}$ for $i \in \mathbb{I}_N$, there exists a partition $\{ \bar{I}_1,\dots, \bar{I}_k\}$ of $\mathbb{I}_N$ such that $\land_{i=1}^k \bar{\Sigma}_i$ is a regular (feedback) interconnection of $k$ components, where $\bar{\Sigma}_i = \land_{j \in \bar{I}_i } \Sigma_j$.
\end{prop}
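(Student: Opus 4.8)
The plan is to observe first that existence is immediate in a degenerate case and then to give the constructive argument that yields the structurally meaningful partition suggested by the preceding circuit example. For the degenerate case, take $k=1$ and $\bar{I}_1 = \mathbb{I}_N$, so that $\bar{\Sigma}_1 = \land_{j=1}^N \Sigma_j = \Sigma$. The one-component interconnection $\land_{i=1}^1 \bar{\Sigma}_i = \bar{\Sigma}_1$ satisfies $p(\Sigma) = \sum_{i=1}^1 p(\bar{\Sigma}_i)$ and $n(\Sigma) = \sum_{i=1}^1 n(\bar{\Sigma}_i)$ trivially, hence is both a regular and a regular feedback interconnection. This already settles the statement, but it discards all network structure, so the substantive goal is to produce as \emph{fine} a partition as possible.

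To this end I would introduce two non-negative \emph{defects}. Using the characterization that $\land_i \Sigma_i$ is regular iff $\mathrm{col}(R_1,\dots,R_N)$ has full row rank (with $R_i$ a minimal kernel representation of $\Sigma_i$), together with the fact that any kernel representation of $\Sigma$ has row rank $p(\Sigma)$, so that $p(\Sigma) = \mathrm{rank}\big(\mathrm{col}(R_1,\dots,R_N)\big)$, define the rank defect $\delta_p = \sum_i p(\Sigma_i) - p(\Sigma) \ge 0$. Likewise, since the interconnection of minimal state-space representations of the $\Sigma_i$ realizes $\Sigma$ with state dimension $\sum_i n(\Sigma_i)$, minimality gives $n(\Sigma) \le \sum_i n(\Sigma_i)$ and hence a degree defect $\delta_n = \sum_i n(\Sigma_i) - n(\Sigma) \ge 0$. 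Regularity is $\delta_p = 0$, and regular feedback is $\delta_p = \delta_n = 0$. The crucial monotonicity is sub-additivity under merging: replacing a group $\bar{I}$ by the single component $\land_{j\in\bar{I}}\Sigma_j$ changes the relevant sum from $\sum_{j\in\bar{I}} p(\Sigma_j)$ to $p\big(\land_{j\in\bar{I}}\Sigma_j\big) \le \sum_{j\in\bar{I}} p(\Sigma_j)$, and analogously for $n$, because stacking kernel (resp.\ cascading state-space) representations only adds constraints (resp.\ states). Thus neither defect can increase under any merge, while $\Sigma$, $p(\Sigma)$, and $n(\Sigma)$ stay fixed.

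Then I would run a greedy merging procedure from the finest partition into singletons. If the interconnection is not regular, a nonzero polynomial row combination $a\,\mathrm{col}(R_1,\dots,R_N)=0$ exists; since each minimal $R_i$ has full row rank, the support of $a$ must meet at least two distinct components $\bar{I} = \{i : a_i \ne 0\}$, and merging exactly those strictly lowers $\delta_p$, because the stacked representation $\mathrm{col}(R_j)_{j\in\bar{I}}$ is rank-deficient, giving $p\big(\land_{j\in\bar{I}}\Sigma_j\big) < \sum_{j\in\bar{I}} p(\Sigma_j)$. Iterating drives the non-negative integer $\delta_p$ to $0$ in finitely many steps. For the feedback version one continues analogously: while $\delta_n>0$ one merges a group of components carrying redundant shared state dynamics (as $\Sigma_2,\Sigma_3$ share $V_C$ in the example), which strictly lowers $\delta_n$ without raising $\delta_p$.

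The main obstacle I anticipate is the feedback step, namely certifying that whenever a regular interconnection fails $\delta_n=0$ one can always select a \emph{proper} subset of components whose merge strictly reduces $\delta_n$. For the existence claim this can be sidestepped cleanly: termination never depends on finding a proper subset, since if no strict reduction by a proper subset is available, merging all remaining components at once yields $\delta_n=0$ by the degenerate case, so both defects are non-negative integers that strictly decrease to zero and the process halts. Producing a canonical \emph{finest} regular (feedback) partition, rather than merely some partition, would require more—presumably an analysis of which minimal-dependence supports among the $R_i$ and which shared-state couplings force two components into the same block—but that refinement exceeds what the proposition demands.
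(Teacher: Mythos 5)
Your first paragraph is exactly the paper's proof: the paper establishes existence simply by choosing $\bar{I}_1 = \mathbb{I}_N$, i.e., the trivial one-block partition, for which regularity (and regular feedback) holds vacuously. The greedy-merging machinery you add afterwards goes beyond what the proposition claims---the paper explicitly defers such ``maximally merged'' partitions to future work---so your proposal is correct and takes essentially the same approach.
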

 \begin{proof}
     The existence is proved by letting $\bar{I}_1 = \mathbb{I}_N$.
 \end{proof}

In a kernel representation of a network, the partition of $\mathbb{I}_N$ corresponds to the merging of some block rows into a new block row. Non-trivial partitions are preferred in practice. Particularly, to maintain the structural information of the network, it may be desired to make $k$ as close to $N$ as possible. It is attractive to develop algorithms to obtain such a ``maximally merged" network, and this will be pursued in the future work.

\begin{rem}  \label{rem:CT}
 All the previous results for discrete-time systems also hold for continuous-time systems, if we replace the time set $\mathbb{Z}_+$ by $\mathbb{R}$, the shift operator $\sigma$ by the differentiation operator $d/dt$, and the definition of $\mathrm{ker}(R)$ by the set of weak solutions $w$ of $R(d/dt)w=0$ \cite{willems1997interconnections}.
\end{rem}

\section{ Structural VAR model} \label{sec:VARX}
\subsection{From SVAR to behavioral networks}
The SVAR model has been applied to several applications \cite{cologni2008oil,hyvarinen2010estimation}. This model can be justified from a statistical perspective \cite{hamilton1994time}. Having established the model of interconnected systems in the behavioral setting, we will investigate its connection with the SVAR model from a behavioral perspective. Another goal is to understand the underlying assumptions when choosing an SVAR model. In this work, we consider the SVAR model in a deterministic setting.

We consider the SVAR model \cite{hyvarinen2010estimation} with exogenous inputs\footnote{A typical example of the $i$-th row in a VAR model is $y_i(t) = ay(t-1) + by(t-2)$, with $a$ and $b$ being row vectors. We have shifted it in time to avoid the negative powers of the shift operator.}: For $i \in \mathbb{I}_N$,
\begin{align*}
y_i(t+l_i) =& X_{i,0} y(t+l_i) + X_{i,1} y(t+l_i-1)+ \dots + \\
& X_{i,l_i} y(t) +  Q_{i,0} u(t+l_i) + \dots Q_{i,l_i} u(t), 
\end{align*}
where $l_i \in \mathbb{Z}_+$, $y \in (\mathbb{R}^{N})^{\mathbb{Z}_+}$, $ u \in (\mathbb{R}^{m})^{\mathbb{Z}_+}$,  $X_{i,k} \in \mathbb{R}^{1 \times N}$ and the $i$-th element of $X_{i,0}$ is zero, and $Q_{i,k} \in \mathbb{R}^{1 \times m}$.  The model can be written more compactly as 
\begin{equation} \label{eq:VAR}
 X(\sigma)  y = Q(\sigma) u,
\end{equation}
which satisfies the following assumption on the instantaneous effects $ X_{i,0}$:
\begin{ass} \label{ass:SVAR}
The leading row coefficient matrix of $X$, i.e., $\mathrm{col}(\mathrm{e}_1- X_{1,0},\dots,\mathrm{e}_N- X_{N,0})$, has ones on its main diagonal and has full rank\footnote{Vector $\mathrm{e}_i$ is a standard basis row vector with the $i$-th element being one.}. 
\end{ass}

We note that Assumption~\ref{ass:SVAR} is less restrictive than the standard assumption in \cite{hyvarinen2010estimation}. In addition, $X$ and $Q$ may contain zeros to encode the interconnections among the scalar-valued signals.

There are two ways to interpret \eqref{eq:VAR}, either as a single dynamical system or as an interconnection of single-output components. The first view has been discussed in \cite{willems1991paradigms}, and we take a network perspective here.

Equation \eqref{eq:VAR} defines an interconnection of dynamical systems $\land_{i=1}^N \Sigma_i$, where the component $\Sigma_i$ is defined by the $i$-th row of \eqref{eq:VAR} with output $y_i$. The inputs of $\Sigma_i$ consist of the external input $u$ and the outputs of the other components. Moreover, the sparsity pattern of the matrices in \eqref{eq:VAR} specifies the incidence matrix of the interconnected system. The above discussions are summarized in the following result:
\begin{lem} \label{lem:VAR}
 Given \eqref{eq:VAR} that satisfies Assumption~\ref{ass:SVAR}, denote the $i$-th row of the model by $R_i=[X  \text{ } -Q    ]_{i \star}$, and consider dynamical systems $\Sigma_i = \Sigma igma( R_i   )$ for $i \in \mathbb{I}_N$. Then $\land_{i=1}^N \Sigma_i$ has an incidence matrix $\mathcal{M} ( [X   \text{ } -Q      ]   )$ and is a regular feedback interconnection. 
\end{lem}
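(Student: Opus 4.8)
The plan is to verify the two claims separately: first that the incidence matrix of $\land_{i=1}^N \Sigma_i$ equals $\mathcal{M}([X \ -Q])$, and then that the interconnection is a regular feedback interconnection. The whole argument hinges on Assumption~\ref{ass:SVAR}, which I would exploit through a single structural observation: since the leading row coefficient matrix is nonsingular and the $i$-th row of $X(s)$ has degree $l_i$ with $s^{l_i}$-coefficient equal to $\mathrm{e}_i - X_{i,0}$, the matrix $X(s)$ is ``row reduced'', i.e. $\det X(s) = \det(\Gamma)\, s^{\sum_i l_i} + (\text{lower-degree terms})$, where $\Gamma := \mathrm{col}(\mathrm{e}_1 - X_{1,0}, \dots, \mathrm{e}_N - X_{N,0})$. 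This identity will deliver both the nonsingularity needed for regularity and the exact degree needed for the McMillan-degree matching.

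For the incidence matrix, I would first check that each $R_i = [X \ -Q]_{i\star}$ is a minimal kernel representation of $\Sigma_i$. Because $X_{i,0}$ has zero $i$-th entry, the $(i,i)$ entry of $X$ is monic of degree $l_i$, so $R_i \neq 0$; being a single nonzero row it has full row rank, hence $p(\Sigma_i) = 1$ and $R_i$ is minimal. Since every signal is scalar-valued here, applying Lemma~\ref{lem:Uncons} to the representation $R_i$ shows that $w_j$ is unconstrained in $\Sigma_i$ exactly when the $j$-th entry of $R_i$ vanishes. Combined with Definition~\ref{eq:Incidence}, this yields $S_{ij} = 0 \iff [\mathcal{M}([X \ -Q])]_{ij} = 0$, so the incidence matrix is $\mathcal{M}([X \ -Q])$.

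For regularity, recall that $\land_{i=1}^N \Sigma_i$ is regular iff $\mathrm{col}(R_1, \dots, R_N) = [X \ -Q]$ has full row rank. I would obtain this from the nonsingularity of $X$ alone: by the row-reducedness identity above and Assumption~\ref{ass:SVAR} ($\det \Gamma \neq 0$), we get $\det X \neq 0$, so $[X \ -Q]$ has full row rank $N$. Hence $p(\Sigma) = N = \sum_{i=1}^N p(\Sigma_i)$ and the interconnection is regular. It then remains to match McMillan degrees. For each component, $n(\Sigma_i)$ is the largest degree among the $1\times 1$ minors (entries) of $R_i$, which equals $l_i$ since $X_{ii}$ is monic of degree $l_i$ and every other entry of $R_i$ has degree at most $l_i$; thus $\sum_i n(\Sigma_i) = \sum_i l_i$. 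For $\Sigma$, $n(\Sigma)$ is the largest degree among the $N\times N$ minors of the minimal representation $[X\ -Q]$: the same identity shows the minor $\det X$ has degree exactly $\sum_i l_i$, giving $n(\Sigma) \geq \sum_i l_i$, while every $N\times N$ minor is a Leibniz sum of products taking one entry (of degree $\le l_i$) from each row $i$, so its degree is at most $\sum_i l_i$. Hence $n(\Sigma) = \sum_i l_i = \sum_i n(\Sigma_i)$, and the interconnection is a regular feedback interconnection.

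The main obstacle is precisely this degree bookkeeping. The easy part is the incidence-matrix claim, which is a direct consequence of Lemma~\ref{lem:Uncons}. The delicate point is proving the exact value $\deg \det X = \sum_i l_i$ from Assumption~\ref{ass:SVAR}: one must correctly identify $\mathrm{e}_i - X_{i,0}$ as the row-$i$ leading coefficient and argue that nonsingularity of the stacked leading coefficients prevents cancellation of the top-degree term. This single fact simultaneously supplies nonsingularity (for regularity) and the lower bound on $n(\Sigma)$, and it must be paired with the elementary row-degree bound on all $N\times N$ minors to close the upper bound. I would double-check that no $N\times N$ submatrix built from columns of $Q$ can exceed this degree, which is immediate from the per-row degree constraint.
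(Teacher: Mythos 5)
Your proposal is correct and follows essentially the same route as the paper's own proof: the incidence matrix is read off from the definition (via Lemma~\ref{lem:Uncons}), regularity follows from $\det X\neq 0$ using the full-rank leading row coefficient matrix of Assumption~\ref{ass:SVAR}, and feedback regularity follows from $\deg\det X=\sum_i l_i$ matched against $n(\Sigma_i)=l_i$. If anything, you are slightly more careful than the paper, which leaves implicit the upper bound that no $N\times N$ minor of $[X\ \ {-Q}]$ (including those using columns of $Q$) can exceed degree $\sum_i l_i$.
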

\begin{proof}
The incidence matrix follows from the definition. From the model, it is clear that $p(\Sigma_i)=1$ and $n(\Sigma_i) =l_i$ for all $i \in \mathbb{I}_N$, leading to $\sum_{i=1}^N p(\Sigma_i) = N$ and $\sum_{i=1}^N n(\Sigma_i) = \sum_{i=1}^N l_i$. Since the leading row coefficient matrix of $X$ has full row rank, we have (i) $\det(X) \not=0$ and thus $p(\land_{i=1}^N \Sigma_i  ) = N = \sum_{i=1}^N p(\Sigma_i)$; (ii) $\mathrm{deg}(\det(X ) ) = \sum_{i=1}^N l_i$ and thus $n(  \land_{i=1}^N \Sigma_i   )=  \sum_{i=1}^N n(\Sigma_i) =\sum_{i=1}^N l_i$.
\end{proof} 

The above result shows that the regularity of interconnections is an essential property of the SVAR model.

From a graphical point of view, the incidence matrix $\mathcal{M} ( [X  \text{ } -Q  ] )$ induces a square \textit{adjacency matrix}:
\begin{equation} \label{eq:AdjVAR} 
A_{\mathrm{svar}} = \mathcal{M} \Big( \begin{bmatrix}
 X   & -Q  \\
 0_{m\times N} &  0_{m\times m}  
\end{bmatrix} \Big). 
\end{equation}
Matrix $A_{\mathrm{svar}}$ of \eqref{eq:VAR} is typically used to obtain a directed graph of the SVAR model \cite{eichler2013causal} via a function $\mathcal{G}_{\mathrm{d}}(A_{\mathrm{svar}})= (\mathcal{V}_{\mathrm{d}},\mathcal{E}_{\mathrm{d}})$, where the vertex set $\mathcal{V}_{\mathrm{d}} = \mathbb{I}_{N+m}$ denotes all the scalar-valued signals, and the edge set
$
\mathcal{E}_{\mathrm{d}} = \{(i,j) \in  \mathbb{I}_{N+m} \times \mathbb{I}_{N+m}    \mid  [A_{\mathrm{svar}} ]_{ji} \not=0 , \text{ } i \not= j \}.
$
Element $(i,j) \in \mathcal{E}_{\mathrm{d}} $ denotes a directed edge from vertex $i$ to $j$. 

The relation between the incidence matrix $S = \mathcal{M} ( [ X   \text{ } -Q ] )$ and the adjacency matrix $A_{\mathrm{svar}}$ leads to an immediate relation between the hypergraph $\mathcal{G}(S)$ and the directed graph $\mathcal{G}_{\mathrm{d}}( A_{\mathrm{svar}})$ of \eqref{eq:VAR}, as discussed in the following example:
\begin{exam}
Consider a SVAR model
$$
\begin{bmatrix}
   \sigma^l y_1\\
  \sigma^l  y_2\\
 \sigma^l   y_3
\end{bmatrix}= \begin{bmatrix}
0 & X_{12}(\sigma) & 0\\
X_{21}(\sigma) & 0 & 0\\
0 & 0 & 0
\end{bmatrix}\begin{bmatrix}
   y_1\\
   y_2\\
   y_3
\end{bmatrix} + \begin{bmatrix}
    Q_1(\sigma) \\
    0 \\
    Q_3(\sigma)
\end{bmatrix}u,
$$
which can be represented by a directed graph as in Fig.~\ref{fig:Var}(a), i.e., signals are vertices and $y_i$ has a directed edge to $y_j$ iff $X_{ji}\not=0$. Edges from $u$ to $y_j$ are defined similarly. 

\begin{figure}[h]
    \centering
    \includegraphics[scale=0.42]{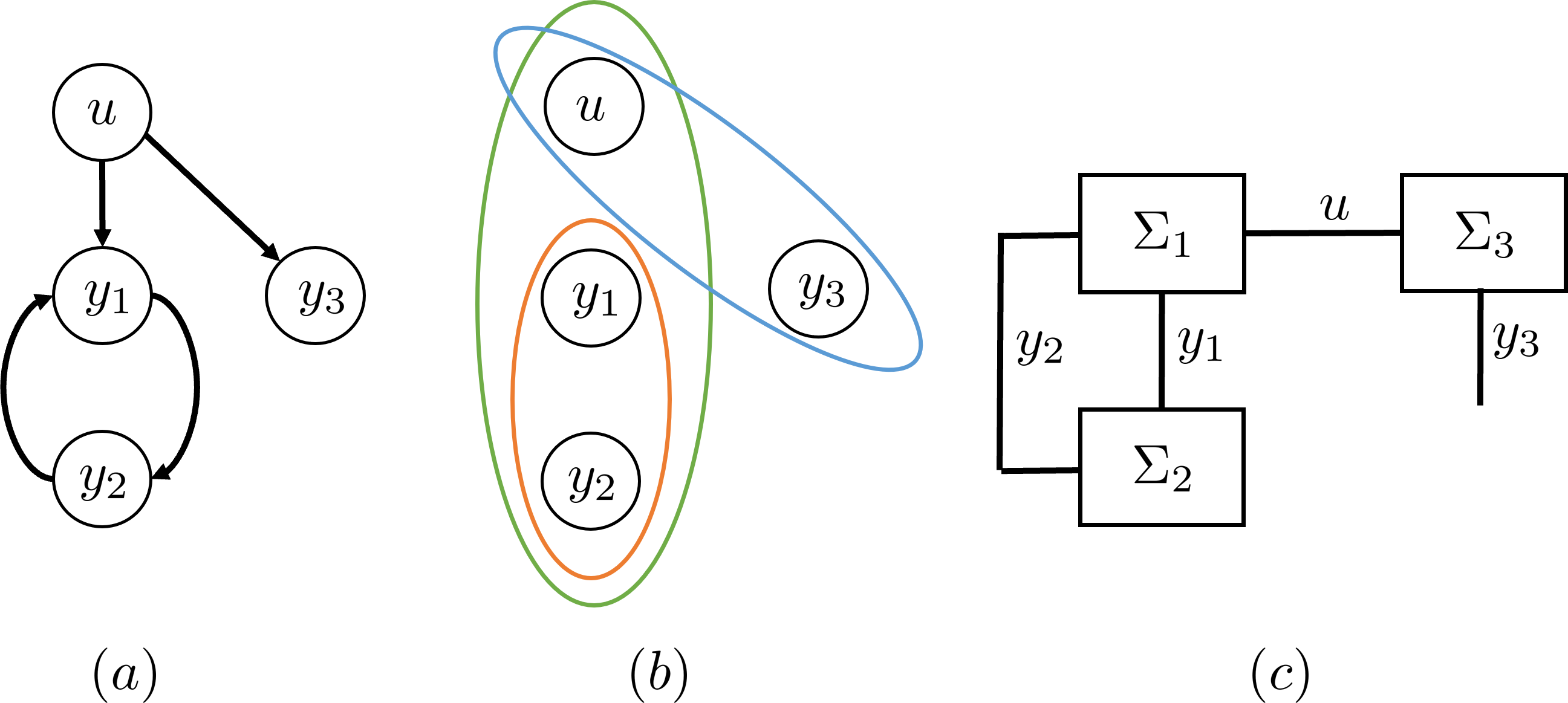}
    \caption{A directed graph of an SVAR model is shown in (a). The model also defines an interconnected system in the behavioral setting with a signal graph (b) and a system graph (c).}
    \label{fig:Var}
\end{figure}

This model also defines an interconnected system $\land_{i=1}^3 \Sigma_i$ of $3$ components with an incidence matrix: 
$$
S= \begin{bmatrix}
 1 & 1 & 0 &  1 \\
1 & 1 & 0 &   0   \\
  0 & 0& 1  &  1\\
\end{bmatrix}.
$$
The signal graph $\mathcal{G}(S)$ and the system graph $\mathcal{G}(S^\top)$ of $\land_{i=1}^3 \Sigma_i$ are shown in Fig.~\ref{fig:Var}(b) and (c), respectively. For example, $\Sigma_1$ in Fig.~\ref{fig:Var}(c) is defined by the first row of the SVAR model, and its behavior involves $y_1$, $y_2$, and $u$.

\end{exam}
\subsection{From behavioral networks to SVAR}
Conversely, an interconnection of dynamical systems can also be represented by \eqref{eq:VAR} if it is a regular feedback interconnection, as shown in the following result:
\begin{prop} \label{prop:VAR} 
 Consider $\Sigma_i = (\mathbb{Z}_+, \mathbb{R}^q, \mathcal{B}_i) \in \mathcal{L}^q$ with $p(\Sigma_i)=1$ for $i \in \mathbb{I}_N$. If $\land_{i=1}^N \Sigma_i$ is a regular feedback interconnection, there exist a proper input-output partition $\mathrm{col}(u,y)$ of $w$, $X \in \mathbb{R}^{N \times N}[s] $, and $Q \in \mathbb{R}^{N \times  (q-N)}[s]$ such that $\cap_{i=1}^N \mathcal{B}_i = \big\{ w \in (\mathbb{R}^q)^{\mathbb{Z}_+ }  \mid w = \Pi \mathrm{col}(y,u), \text{ } \eqref{eq:VAR}   \big\}$ for some permutation matrix $\Pi$, and $X$ satisfies Assumption~\ref{ass:SVAR}.
\end{prop}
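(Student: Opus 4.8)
The plan is to build the SVAR representation directly from the stacked minimal kernel representations of the components and then verify Assumption~\ref{ass:SVAR} by a degree count. Since $p(\Sigma_i)=1$, each $\Sigma_i$ has a minimal kernel representation $R_i\in\mathbb{R}^{1\times q}[s]$, a single nonzero polynomial row. Because $\land_{i=1}^N\Sigma_i$ is in particular a regular interconnection, the stacked matrix $R=\mathrm{col}(R_1,\dots,R_N)$ has full row rank $N$; as $\mathrm{ker}(R)=\cap_i\mathrm{ker}(R_i)=\cap_i\mathcal{B}_i$ and $\mathrm{rank}(R)=N=p(\Sigma)$, $R$ is a minimal kernel representation of $\Sigma=\land_{i=1}^N\Sigma_i$. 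First I would invoke Corollary~\ref{lem:RegularNetwork}: since the interconnection is a regular feedback interconnection, there is a component-wise partition $w=\Pi\,\mathrm{col}(y_1,\dots,y_N,u)$ with scalar outputs $y_i\in(\mathbb{R}^1)^{\mathbb{Z}_+}$ such that $\mathrm{col}(u,\mathrm{col}(y_1,\dots,y_N))$ is a proper input-output partition of $\Sigma$ and, for each $i$, $y_i$ is a proper output of $\Sigma_i$. Writing $y=\mathrm{col}(y_1,\dots,y_N)$ and permuting the columns of $R$ accordingly yields a splitting $R\,\Pi=[\,X\;\;-Q\,]$ with $X\in\mathbb{R}^{N\times N}[s]$, $Q\in\mathbb{R}^{N\times(q-N)}[s]$, so that $\cap_i\mathcal{B}_i=\{w\mid w=\Pi\,\mathrm{col}(y,u),\ \eqref{eq:VAR}\}$, with $\det X\neq0$ and $X^{-1}Q$ proper.

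Next I would arrange the SVAR normalization. For each $i$, the diagonal entry $X_{ii}$ is exactly the scalar ``$P$'' of the proper input-output representation of $\Sigma_i$ with output $y_i$, so $\deg X_{ii}\ge\deg X_{ij}$ and $\deg X_{ii}\ge\deg Q_{ik}$ for all $j,k$; hence row $i$ has degree exactly $l_i:=\deg X_{ii}$. Since $\Sigma_i$ is single-output, its McMillan degree $n(\Sigma_i)$ equals the degree of its minimal kernel row, whence $l_i=n(\Sigma_i)$. Dividing row $i$ of $[\,X\;\;-Q\,]$ by the nonzero leading coefficient of $X_{ii}$ is a constant diagonal (hence unimodular) left transformation, so it preserves the behavior, the full row rank, $\det X$ up to a nonzero scalar, and the properness of $X^{-1}Q$; after it each $X_{ii}$ is monic of degree $l_i$. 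The coefficient of $s^{l_i}$ in row $i$ then has a $1$ in position $i$, which is precisely the statement that the leading row coefficient matrix $L:=\mathrm{col}(\mathrm{e}_1-X_{1,0},\dots,\mathrm{e}_N-X_{N,0})$ has ones on its main diagonal, i.e.\ $(X_{i,0})_i=0$.

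The crux is the full-rank part of Assumption~\ref{ass:SVAR}, where the regular feedback condition enters. Since each row $i$ of $X$ has degree exactly $l_i$, the coefficient of $s^{\sum_i l_i}$ in $\det X$ is exactly $\det L$, so $\deg\det X=\sum_i l_i$ if and only if $L$ is nonsingular. To pin down $\deg\det X$ I would use that $\mathrm{col}(u,y)$ is a proper input-output partition: every $N\times N$ minor of the minimal kernel representation $[\,X\;\;-Q\,]$ equals $\det X$ times the corresponding minor of $[\,I\;\;-X^{-1}Q\,]$, and the latter is a proper rational function because $X^{-1}Q$ is proper, so every such minor has degree at most $\deg\det X$. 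Hence $n(\Sigma)=\deg\det X$. Combined with the regular feedback identity $n(\Sigma)=\sum_i n(\Sigma_i)=\sum_i l_i$, this gives $\deg\det X=\sum_i l_i$, forcing $\det L\neq0$; thus $L$ has full rank and $X$ satisfies Assumption~\ref{ass:SVAR}.

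The main obstacle is this last step. Everything up to the normalization is essentially bookkeeping on top of Corollary~\ref{lem:RegularNetwork}, but proving that the leading row coefficient matrix is nonsingular is where the \emph{feedback} part of regularity, namely the additivity of McMillan degrees, is genuinely needed: the regularity of the plain interconnection alone yields only $\det X\neq0$ (so $p(\Sigma)=N$), and it is the additional McMillan-degree equality that upgrades $X$ to the row-reduced, full-leading-rank SVAR form. I would take care that the minor/determinant-degree identification is applied correctly, in particular the equivalence $\deg\det X=\sum_i l_i\Leftrightarrow\det L\neq0$ for a matrix with row degrees $l_i$, since this is the precise hinge between the behavioral invariant $n(\Sigma)$ and Assumption~\ref{ass:SVAR}.
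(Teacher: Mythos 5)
Your proof is correct and takes essentially the same route as the paper's: stack the minimal kernel rows into a minimal kernel representation of $\Sigma$, use the regular feedback interconnection (via Corollary~\ref{lem:RegularNetwork}, i.e., the $N$-component version of Willems' Theorems~8 and~12) to get a proper input-output partition with scalar diagonal blocks of row degree $l_i=n(\Sigma_i)$, conclude nonsingularity of the leading row coefficient matrix from $\deg\det X=\sum_i l_i=n(\Sigma)$, and finish with the diagonal monic normalization. The only distinction is that you derive the degree facts explicitly (from properness of the component partitions and the minor characterization of the McMillan degree) where the paper imports them by ``following the proofs analogously,'' so your write-up is a more self-contained rendering of the same argument.
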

\begin{proof}
Given minimal kernel representations $R_i$ of $\Sigma_i$, $R= \mathrm{col}(R_1, \dots, R_N)$ has full row rank due to the assumption of regular feedback interconnection, and thus $R$ is a minimal kernel representation of $\Sigma = \land_{i=1}^N \Sigma_i$. Since $n(\Sigma) = \sum_{i=1}^N n(\Sigma_i)$,  following the proofs in \cite[Th. 8 and 12]{willems1997interconnections} analogously, there exists a proper input-output partition $\mathrm{col}(u,y)$ of $w$, where $y=\mathrm{col}(y_1,\dots,y_N)$, leading to a behavioral equation
\begin{equation} \label{eq:proofLemWille}
\begin{bmatrix}
  X_1(\sigma) & \dots & \star\\
  \vdots & \ddots & \vdots \\
  \star & \dots & X_N(\sigma)
\end{bmatrix} \begin{bmatrix}
   y_1\\
   \vdots \\
   y_N
\end{bmatrix}=Q(\sigma) u,
\end{equation}
where $X$ and $Q$ are submatrices of $R$, $X_i$ has dimension $1\times 1$ with $\mathrm{deg}(X_i) = n(\Sigma_i) $, and $\mathrm{deg}(\det(X))  = \sum_{i=1}^N \mathrm{deg}(X_i) = n(\Sigma)$. The last fact shows that the leading row coefficient matrix of $X$ has full rank. Moreover, the fact $\mathrm{deg}(X_i) = n(\Sigma_i) $ shows that the diagonal entries of the leading row coefficient matrix of $X$ equal the leading coefficients of $X_i$. Then pre-multiplying \eqref{eq:proofLemWille} with a real diagonal matrix can transform $X_i$ into a monic polynomial for all $i \in \mathbb{I}_N$, leading to ones on the main diagonal of the leading row coefficient matrix of $X$. This proves the properties in Assumption~\ref{ass:SVAR}.
\end{proof}

If the interconnected system $\land_{i=1}^N \Sigma_i$ in Proposition~\ref{prop:VAR} has an incidence matrix $S$, it is easy to obtain the adjacency matrix of the SVAR model given the obtained $X$ and $Q$ as in the above proof: It holds that $\mathcal{M}([X  \text{ } -Q ]) = S \Pi $. It is also attractive to investigate whether it is possible to obtain the graph of the SVAR model by using only binary matrices, without resorting to polynomial matrices. This will be investigated in future work.

\section{CONCLUSIONS}
 In this work, a behavioral perspective on linear networks is further developed, following the behavioral theory \cite{willems1991paradigms,willems2007behavioral}. This is an initial step towards addressing the two issues, i.e., the incorporation of freedom in input and output selection and building a unified view on different network models.

 Building on the behavioral theory, the novelty of this work lies in the introduction of new graphical representations of behavioral networks by exploiting the concept of hypergraphs. Regularity of interconnections is also discussed and has been shown as an underlying assumption of SVAR models. Moreoever, the connection between the hypergraphs of behavioral networks and the directed graph of SVAR models is investigated.

\section{Acknowledgements}
The first author thanks Timothy H. Hughes, Gareth Willetts from the University of Exeter, and Lizan Kivits from the Eindhoven University of Technology for discussions.

\addtolength{\textheight}{-12cm}  



\bibliographystyle{IEEEtran} 
\bibliography{NetworkSIbehav.bib}

\end{document}